\def\OPTIONConf{1}

\ifnum\OPTIONConf=1%
\documentclass[noderivs]{eptcs} %
\else
\documentclass[10pt,leqno]{article}
\fi
\usepackage[overload]{textcase}
\ifnum\OPTIONConf=1%
    \newcommand{\keywordfontsize}{11pt}  %
\else%
   \usepackage[headings]{fullpage}
   \addtolength\textheight{40pt}
   \newcommand{\keywordfontsize}{11pt}  %
\fi
\usepackage[bf,hang,ruled]{caption}
\usepackage{color}
\usepackage{breakurl}
\usepackage{pstricks,pst-node,graphicx,eepic}

\usepackage{listings}
\usepackage[authoryear]{natbib}
\bibpunct{(}{)}{;}{a}{}{,}

\usepackage{pgf,tikz}
\usetikzlibrary{shapes,arrows,matrix,fit,backgrounds,calc,decorations,decorations.pathmorphing}

\usepackage{xspace}
\usepackage{euler}

\usepackage{amsmath,amssymb,amsthm}

\usepackage{bm}

\usepackage{multirow}
\usepackage{stmaryrd}
\usepackage{pifont}
\usepackage{enumerate}
\usepackage{comment}
\usepackage{mathpartir} \usepackage{proof}

\usepackage{framed}
\usepackage{rotating}

\let\MathRightArrow\Rightarrow %
\usepackage{marvosym} %
\def\Rightarrow{\MathRightArrow}

\newtheorem{theorem}{Theorem}

\newtheorem{lemma}[theorem]{Lemma}
\newtheorem{corollary}[theorem]{Corollary}
\newtheorem*{corollary*}{Corollary}

\newtheorem*{conjecture*}{Conjecture}
\newtheorem{exercise}[theorem]{Exercise}
\newtheorem{example}[theorem]{Example}
\newtheorem*{example*}{Example}

\newtheorem*{theorem*}{Theorem}
\theoremstyle{remark} 
\theoremstyle{definition} \newtheorem{definition}[theorem]{Definition}

\usepackage{srcltx}

\newcommand{\xLam}{\lambda}
\newcommand{\Lam}[1]{\xLam#1.\,}







\newcommand{\arr}{\rightarrow}
\newcommand{\entails}{\,\vdash\,}         

\newcommand{\subtype}{\mathrel{\leq}}       

\newcommand{\sectty}{\mathrel{\land}}          
\newcommand{\unty}{\mathrel{\lor}}          

\newcommand{\UnivSym}{\Pi}
\newcommand{\ExisSym}{\Sigma}
\newcommand{\Univ}[2]{\UnivSym{#1}{:}{#2}.\,}
\newcommand{\Exis}[2]{\ExisSym{#1}{:}{#2}.\,}

\newcommand{\indexeq}{\doteq}           

\newcommand{\step}{\mapsto}

\newcommand{\Let}[3]{{\textkw{let}\:#1\,{\texttt{=}}\,#2\:\textkw{in}\:#3}}

\newcommand{\tyname}[1]{\ensuremath{\mathsf{#1}}}
\newcommand{\keyword}[1]{\text{\usefont{T1}{\rmdefault}{b}{n}\fontsize{\keywordfontsize}{11pt}\selectfont#1}}
\newcommand{\textkw}[1]{\keyword{#1}}

\newcommand{\Bits}{\tyname{bits}}

\newcommand{\List}{\tyname{list}}
\newcommand{\Even}{\tyname{even}}
\newcommand{\Odd}{\tyname{odd}}



\newcommand{\arrayenvb}[1]{\renewcommand{\arraystretch}{1}  \begin{array}[b]{@{}c@{}}#1\end{array}}

\newcommand{\Unit}{\tyname{unit}}

\newcommand{\unit}{\texttt{()}}

\newcommand{\against}{\Leftarrow}
\newcommand{\has}{\Rightarrow}

\newcommand{\rulesep}{~~~~~~~}

\newcommand{\Dee}{\mathcal{D}}

\newcounter{codeLineCntr}


\newcommand{\out}[1]{}





\renewcommand{\phi}{\varphi}

\newcommand{\Figureref}[1]{Figure \ref{#1}}

\newcommand{\Sectionref}[1]{Section \ref{#1}}

\newcommand{\val}{~\mathsf{value}}





\newcommand{\intIS}{\mathsf{int}}


\newcommand{\Lbrack}{\char"5B}
\newcommand{\Rbrack}{\char"5D}

\definecolor{lred}{rgb}{1.0, 0.3, 0.3}
\newrgbcolor{lred}{1.0 0.3 0.3}

\newcommand{\BLACKNODE}[1]{~~\raise4pt\hbox{\psellipse[fillstyle=solid,fillcolor=black](0, 0)(6pt,6pt)}\hspace{-3pt}{\textcolor{white}{\textbf{#1}}}~\:}
\newcommand{\REDNODE}[1]{~~\raise4pt\hbox{\psellipse[fillstyle=solid,fillcolor=lred](0, 0)(6pt,6pt)}\hspace{-3pt}{\textbf{#1}}~\:}

\newlength\zzskipwidthlen

\newcommand{\bnfas}{\mathrel{::=}}
\newcommand{\bnfalt}{\mathrel{\mid}}

\newcommand{\doublecomma}{{\,,\hspace{-0.2em},\hspace{0.2em}}}
\newcommand{\Merge}[2]{{#1} \doublecomma {#2}}

\newcommand{\Rmerge}[1]{\ensuremath{\textsf{merge$\against$}_{#1}}\xspace}

\newcommand{\Rmergeup}[1]{\ensuremath{\textsf{merge$\has$}_{#1}}\xspace}
\newcommand{\Rmergeupx}{\ensuremath{\textsf{merge$\has$}}\xspace}





%
%
%
%
\ifnum\OPTIONConf=1%
\def\OPTIONLoudLabels{0}
\else
\def\OPTIONLoudLabels{1}
\fi
%
%

\newcommand{\textt}[1]{\texttt{1}}

\newdimen{\zzzpbox}

\newdimen\zzfontsz
\newcommand{\fontsz}[2]{\zzfontsz=#1%
{\fontsize{\zzfontsz}{1.2\zzfontsz}\selectfont{#2}}}

\newlength{\zzsplatboldwidth}
\newcommand{\xsplatbold}[2]{\settowidth{\zzsplatboldwidth}{{#2}}{#2}\addtolength{\zzsplatboldwidth}{-#1}\hspace{-\zzsplatboldwidth}\raisebox{#1}{{#2}}}
\newcommand{\splatbold}[1]{\xsplatbold{-0.04mm}{#1}}

\newlength{\zzsooperboldwidth}

\makeatletter
\def\url@MGstyle{%
\def\UrlFont{\tiny\ttfamily}
\def\do@url@hyp{\do\-}
\Url@do
}
\makeatother

%

\ifnum\OPTIONLoudLabels=1%
\newcommand{\Label}[1]{\LoudLabel{#1}}%
\newcommand{\FLabel}[1]{\label{#1}%
{\tt\scriptsize{#1}}}%
\else%
\newcommand{\Label}[1]{\label{#1}}%
\newcommand{\FLabel}[1]{\label{#1}}%
\fi



\newcommand{\Infer}[3]{\inferrule*[right={\text{\strut#1}}]{{}#2\mathstrut}{{}#3\mathstrut}}

\newcommand{\BeginProof}{\renewcommand{\arraystretch}{1.1} \begin{tabular}[b]{r@{}r @{} l  l}}
\newcommand{\EndProof}{\end{tabular} \renewcommand{\arraystretch}{\mydefaultarraystretch}}

\newenvironment{llproof}{\BeginProof}{\EndProof}

\newcommand{\proofheading}[1]{}  






\newcommand{\erasetypes}[1]{\ensuremath{|#1|}}


\newcommand{\ctxsubtype}{\mathrel{\,\lesssim\,}}

\newcommand{\Rctxanno}{\ensuremath{\mathsf{ctx}\textsf{-}\mathsf{anno}}\xspace}

\newcommand{\RctxSempty}{\ensuremath{{\lesssim}\mathsf{\textsf{-}empty}}\xspace}
\newcommand{\RctxSivar}{\ensuremath{{\lesssim}\mathsf{\textsf{-}ivar}}\xspace}
\newcommand{\RctxSpvar}{\ensuremath{{\lesssim}\mathsf{\textsf{-}pvar}}\xspace}












\newgray{grayboxgray}{0.85}
\newcommand{\textgraybox}[1]{\psframebox[framesep=0pt,fillcolor=grayboxgray,linewidth=0.5pt]{\fbox{\parbox[s][\totalheight]{0mm}{}#1}}}

\ifnum\OPTIONConf=1
  \newcommand{\judgboxfontsize}[1]{\large #1}
\else
  \newcommand{\judgboxfontsize}[1]{\fontsz{10pt}{#1}}
\fi

\newcommand{\judgbox}[2]{%
      {\raggedright \textgraybox{\judgboxfontsize{\ensuremath{#1}}}\!\begin{tabular}[c]{l} #2 \end{tabular} %
      \ifnum\OPTIONConf=1 \\[0pt] \else \\[6pt] \fi%
}}
\newcommand{\judgboxtwelf}[3]{%
      {\raggedright \textgraybox{\judgboxfontsize{\ensuremath{#1}}}\!\begin{tabular}[c]{l} #2 \end{tabular} %
      \!\!\textgraybox{\fontsz{8pt}{#3}}%
      \ifnum\OPTIONConf=1 \\[0pt] \else \\[6pt] \fi%
}}











\newcommand{\Rarrelim}{\ensuremath{{\arr}\text{E}}\xspace}

\newcommand{\Rarrintro}{\ensuremath{{\arr}\text{I}}\xspace}

\newcommand{\Runit}{\ensuremath{\Unit \text{I}}\xspace}

\newcommand{\Rsectintro}{\ensuremath{\land \text{I}}\xspace}

\newcommand{\Rsectelim}[1]{\ensuremath{\land \text{E}_{#1}}\xspace}
\newcommand{\Runivintro}{\ensuremath{\UnivSym \text{I}}\xspace}
\newcommand{\Runivelim}{\ensuremath{\UnivSym \text{E}}\xspace}

\newcommand{\Rvar}{\ensuremath{\mathsf{var}}\xspace}

\newcommand{\Rsub}{\ensuremath{\mathsf{sub}}\xspace}

\newcommand{\Ranno}{\ensuremath{\mathsf{right\text{-}anno}}\xspace}
\newcommand{\Rguard}{\ensuremath{\mathsf{left\text{-}anno}}\xspace}

\newcommand{\MKSUB}[1]{\ensuremath{{#1}{\subtype}}\xspace}
\newcommand{\subRefl}{\MKSUB{\text{refl}}}
\newcommand{\subArr}{\MKSUB{\arr}}

\newcommand{\subSectL}[1]{\MKSUB{{\sectty}\text{L}\ensuremath{_{#1}}}}
\newcommand{\subSectR}{\MKSUB{{\land}\text{R}}}

\newcommand{\subIndex}{\MKSUB{{i}\text{LR}}}
\newcommand{\subUnivL}{\MKSUB{{\UnivSym}\text{L}}}
\newcommand{\subUnivR}{\MKSUB{{\UnivSym}\text{R}}}

\newcommand{\cmtbegin}{\texttt{(*}}
\newcommand{\cmtend}{\texttt{*)}}
\newcommand{\annobegin}{\cmtbegin\texttt{\Lbrack}~\,}
\newcommand{\annoend}{\,~\texttt{\Rbrack}\cmtend}



\newcommand{\mydefaultarraystretch}{1.2}








%
%
%





\newcommand{\citepSequentCalculus}{\citep{Gentzen35}\xspace}



\newcommand{\dimIS}{\mathsf{dim}}



\newenvironment{displ}{\vspace{5pt} \begin{center} ~\!\!}{\end{center} \vspace{5pt}}

\newenvironment{mathdispl}{\ifnum\OPTIONConf=1\vspace{-10pt}\else\vspace{-15pt}\fi \begin{center}\begin{mathpar} ~\!\!}{\end{mathpar}\end{center}}

\def\quofile{15}
\newcommand{\startMquotes}{\immediate\openout \quofile=\jobname.quo}
\newcommand{\MquoteM}[4]{{\small \label{#3} \begin{quotation} #1 \vspace{-0.5em} \flushright{---#2} \end{quotation}} { \immediate\write\quofile{\expandafter\csname Mquoteentry\endcsname{}{#2}{#3}{\expandafter\csname #4\endcsname}}}}
\newcommand{\Mquote}[4]{{\small \label{#3} \begin{quotation} #1 \vspace{-0.5em} \flushright{---#2} \end{quotation}} { \immediate\write\quofile{\expandafter\csname Mquoteentry\endcsname{}{#2}{#3}{#4}}}}


\newcommand{\AllSym}{\forall}
\newcommand{\xAll}[1]{\AllSym#1}
\newcommand{\All}[1]{\xAll{#1}.\:}





\newcommand{\marginnoteRmlExamples}[1]{{}}

\newcommand{\term}[1]{\texttt{#1}}

\newcommand{\LBRACK}{\term{\Lbrack}}
\newcommand{\RBRACK}{\term{\Rbrack}}



\newcounter{zzInOinkComment}
\setcounter{zzInOinkComment}{0}
\newcommand{\oinkkw}[1]{\ifnum\value{zzInOinkComment}=0{\usefont{T1}{cmtt}{m}{it}{\splatbold{#1}}}\else{\normalfont\textsl{#1}}\fi}

\newcommand{\lladdconj}{\mathrel{\binampersand}}


\newcommand{\RZZaddconjR}[1]{\ensuremath{{\lladdconj}\text{R}}\xspace}

%
%
\newdimen\zzlistingsize
\newdimen\zzlistingsizedefault
\zzlistingsizedefault=9pt
\zzlistingsize=\zzlistingsizedefault
\global\def\CommentCopter{0}
\newcommand{\Lstbasicstyle}{\fontsize{\zzlistingsize}{1.05\zzlistingsize}\ttfamily}
\newcommand{\keywordcopter}{\fontsize{1.0\zzlistingsize}{1.0\zzlistingsize}\bf}
\newcommand{\stupidcopter}{\if0\CommentCopter\keywordcopter\fi}
\newcommand{\commentcopter}{\def\CommentCopter{1}\fontsize{0.95\zzlistingsize}{1.0\zzlistingsize}\rmfamily\slshape}

\newlength{\zzlstwidth}
\newcommand{\setlistingsize}[1]{\zzlistingsize=#1%
\settowidth{\zzlstwidth}{{\Lstbasicstyle~}}}
\setlistingsize{\zzlistingsizedefault}

\lstset{literate={->}{{$\rightarrow~$}}2 %
                               {=>}{{$\Rightarrow~$}}2 %
                               {==>}{{$\Longrightarrow~$}}3 %
                               {\&}{{$\hspace{-2pt}\sectty$}}1 %
                               {\\/}{{$\hspace{-1.5pt}\unty$}}1 %
                               {<<}{\color{dRed}}1 %
                               {>>}{\color{black}}1 %
                               {(*[}{{$\annobegin\!$}}2 %
                               {*ANNOB*}{{$\annobegin\!$}}2 %
                               {]*)}{{$\annoend$}}2 %
                               {*ANNOE*}{{$\annoend$}}2 %
                               {-all}{-{\stupidcopter all}$~$}3 %
                               {-exists}{-{\stupidcopter exists}$~$}5 %
                               {\\let}{{\stupidcopter let}$~$}2 %
                               {\\if}{{\stupidcopter if}$~$}1 %
                               {RED}{yeah yeah}0 %
                               {\\in}{{\stupidcopter in}$~$}1 %
                               {\\val}{{\stupidcopter val}$~$}2 %
                               {as}{{\stupidcopter as}$~$}1 %
                               {else}{{\stupidcopter else}$~$}3 %
                               {case}{{\stupidcopter case}$~$}3 %
                               {raise}{{\stupidcopter raise}$~$}4 %
                               {datatype}{{\stupidcopter datatype}$~$}7 %
                               {datasort}{{\stupidcopter datasort}$~$}6 %
                               {indexconstant}{{\stupidcopter indexconstant}$~$}{11} %
                               {indexsort}{{\stupidcopter indexsort}$~$}{6} %
                               {primitive}{{\stupidcopter primitive}$~$}6 %
               ,
               language=StardustML,
               columns=[l]fullflexible,
               basewidth=\zzlstwidth,
               basicstyle=\Lstbasicstyle,
               keywordstyle=\keywordcopter,
               identifierstyle=\relax,
               showstringspaces=false,
               stringstyle=\color{dDkGreen},
               commentstyle=\commentcopter,
               breaklines=true,  %
               breakatwhitespace=true,   
               mathescape=true,
               texcl=false}

\definecolor{dHilite}{rgb}{0.9, 0.9, 0.6}
\definecolor{dRed}{rgb}{0.65, 0.0, 0.0}
\definecolor{DRED}{rgb}{0.65, 0.0, 0.0}
\definecolor{dGreen}{rgb}{0.0, 0.65, 0.0}
\definecolor{dDkGreen}{rgb}{0.0, 0.35, 0.0}
\definecolor{dBlue}{rgb}{0.0, 0.0, 0.65}
\definecolor{dPurple}{rgb}{0.65, 0.0, 0.65}
\definecolor{dDigPurple}{rgb}{0.5, 0.0, 0.5}
\definecolor{DDIGPURPLE}{rgb}{0.5, 0.0, 0.5}  %
\definecolor{dFaint}{rgb}{0.7, 0.7, 0.7}
\definecolor{dGray}{rgb}{0.5, 0.5, 0.5}
\definecolor{dDark}{rgb}{0.2, 0.2, 0.2}
\definecolor{dAlmostBlack}{rgb}{0.1, 0.1, 0.1}

\newcommand{\mathparleft}[1]{%
\end{mathpar} ~\\[-15pt]
      {\raggedright \begin{tabular}[c]{l} {#1} \end{tabular} %
      \\[-15pt]}%
\begin{mathpar}%
}

\newcommand{\explicitunivintrosym}{\Lambda}
\newcommand{\explicitunivintro}[2]{\explicitunivintrosym{#1 : #2}.\; }

\newcommand{\contextualtype}[2]{{#1}\LBRACK{#2}\RBRACK}
\newcommand{\contextualinst}[2]{{#1}\LBRACK{#2}\RBRACK}

\newcommand{\guardentails}{\mathrel{\texttt{>:>}}}
\newcommand{\Dec}{d}
\newcommand{\anno}[2]{\texttt{(}{#1} \,\texttt{:}\, {#2}\texttt{)}}
\newcommand{\lanno}[2]{{#1} \guardentails {#2}}
\newcommand{\xexistybind}[2]{\textkw{some}~#1 : #2}
\newcommand{\existybind}[2]{\xexistybind{#1}{#2}.\;}

\newcommand{\encode}[1]{\mathsf{trans}(#1)}

\newcommand{\shortandlongtitle}{Annotations for Intersection Typechecking}

\title{%
   \shortandlongtitle
}

\ifnum\OPTIONConf=1     %

\author{Joshua Dunfield
    \institute{Max Planck Institute for Software Systems
          \\ Kaiserslautern and Saarbr\"ucken, Germany}
    \email{j\hspace{-0.5pt}os~~\!\hspace{-3pt}u\hspace{-9.5pt}h\hspace{5pt}a@mpi-sws.org}
}
\else
\author{%
     Joshua Dunfield %
  \\
     \small Max Planck Institute for Software Systems
\\ \small Kaiserslautern and Saarbr\"ucken, Germany
}
\fi

\begin{document}
\maketitle

\begin{abstract}
  In functional programming languages, the classic form of annotation is
  a single type constraint on a term.  Intersection types add complications:
  a single term may have to be checked several times against different types,
  in different contexts, requiring annotation with several types. 
  Moreover, it is useful (in some systems, necessary) to indicate the context
  in which each such type is to be used.

  This paper explores the technical design space of annotations in systems
  with intersection types.  Earlier work~\citep{Dunfield04:Tridirectional} introduced
  \emph{contextual typing annotations}, which we now tease apart into
  more elementary mechanisms: a ``right hand'' annotation (the standard form),
  a ``left hand'' annotation (the context in which a right-hand annotation is to be used),
  a \emph{merge} that allows for multiple annotations, and an existential binder for
  index variables. 
  The most novel element is the left-hand annotation, which guards terms
  (and right-hand annotations) with a judgment that must follow from the
  current context.
\end{abstract}


\setcounter{footnote}{0}

\section{Introduction}

The origin of intersection lay in the analysis of the solvability of $\lambda$-terms;
the key early result was that, in a system with $\arr$ and $\sectty$, typeability and
strong normalization coincide~\citep{CDV81:IntersectionTypes}.
While pure type assignment is thus undecidable for intersection types,
systems that \emph{check} types of lightly-annotated programs, including systems
based on bidirectional typechecking, have had some success.  But
constructing a type-checking system from a type assignment system is not trivial.
A key issue is the design of the annotations.  The classic annotation form
$(e : A)$, which merely marks a term with a single type,
fails in intersection type systems that must check the same term several times,
in different contexts.  Furthermore, in systems with indexed types, we run into
problems with the scope of index variables; the simple mechanism of a term-level
binder fails, because intersections can be formed from types with different numbers
of quantifiers.

For guidance, we can look to logic and the form of hypothetical judgments: in
$\Gamma \entails \Delta$ we have, on the left, assumptions $\Gamma$ (implicitly conjoined,
because we wish to make several assumptions, each definite); on the right, we have conclusion
$\Delta$.  In the sequent calculus~\citepSequentCalculus, the conclusion is plural and implicitly
\emph{dis}joined: from a conjunction of assumptions, we conclude a disjunction of
conclusions.  This conforms to the internal duality of the sequent calculus.

The classic annotation form, $e : A$, seems to be ``on the right''.  It is
an obligation that constrains the type of $e$: ``I insist that 
$e$ have type $A$, and if you cannot satisfy this demand, typechecking should fail.''
(The term $e$ might have some other type $B$, but unless $B$ is a subtype of $A$
the demand is not met.  Also, in typecheckers that backtrack, like the
intersection-type checkers considered in this paper, the requirement that ``typechecking should fail''
means that the particular typing subproblem fails---the program could still typecheck.)
Writing $(e : A)$ does not correspond
to having an assumption $e : A$, because that would let us assume that $e$ has type $A$,
even if it should not have that type.  Further evidence in support
of right-handedness is that several systems with intersection types allow lists of
types in annotations, and these lists are
interpreted disjunctively, consistent with the sequent calculus where
lists of conclusions are interpreted disjunctively.

If the classic annotation $(e : A)$ is ``on the right'', what form of annotation is ``on the left''?
It is hard to imagine an annotation
that is not an obligation, or does not contribute to an obligation (leaving aside the sort
of annotation that is an explicit direction to ignore truth and charge ahead, as with
the \texttt{admit} of Coq~\citep{Coq:website}
or the \texttt{\%trustme} of Twelf~\citep{Twelf:website}).

We can, however, distinguish annotations that carry an obligation with respect to the
term on the right of the turnstile, such as $(e : A)$, from those that carry an
obligation with respect to the assumptions on the \emph{left} of the turnstile.
Writing such a ``left-hand'' annotation says, ``I insist on something about the
assumptions you have when you type this term, and if you cannot satisfy me, give up.''
Since the point of an assumption is to help conclude things, the
``something about the assumptions'' should be about what those assumptions entail.  The most
direct entailment is the use of a hypothesis: if $\Gamma = \{\Gamma_1, \dots, \Gamma_n\}$
then $\Gamma \entails \Gamma_k$ for $1 \leq k \leq n$, suggesting that we should
be able to write part of a context as a left-hand annotation.

The last piece of the puzzle is a way of writing more than one (right-hand) annotation.
It suffices to support a well-behaved special case of the unruly
\emph{merge construct}~\citep{Dunfield12:elaboration}.

\paragraph{Contents}  We start by giving an overview of annotations in intersection
type systems (\Sectionref{sec:overview}), then describe a language whose most notable
features are the left-hand \emph{guard annotation} (\Sectionref{sec:language}) and
a merge construct (\Sectionref{sec:opsem}).
Next, we extend that language with indexed types (\Sectionref{sec:indexed});
the presence of index variables leads us to another construct (an existential binder for
index variables).
In \Sectionref{sec:comparison},
we show that the features of the extended language---left- and right-hand annotations, plus the
merge construct and the existential binder---collectively subsume the
\emph{contextual typing annotations} developed in earlier work~\citep{Dunfield04:Tridirectional},
replacing one complicated construct with several simpler ones.  \Sectionref{sec:contextual-types}
compares our approach to contextual modal types.
Finally, we briefly discuss a prototype implementation (\Sectionref{sec:implementation})
and speculate on the usability of the approach (\Sectionref{sec:usability}).

\section{Overview} \Label{sec:overview}

For languages based on the ordinary $\lambda$-calculus,
the usual form of annotation is a single type, either around a term ($e : A$) or on
a bound variable ($\Lam{x:A} e$).  In such languages, the single type corresponds to typing:
exactly one subderivation types each subterm $e$.

In languages with intersection types, the introduction rule for intersection
replicates the same term in each premise:
\[
   \Infer{\Rsectintro}
        {\arrayenvb{\Dee_1 \\ \Gamma \entails e : A_1}
          \\
          \arrayenvb{\Dee_2 \\ \Gamma \entails e : A_2}}
        {\Gamma \entails e : A_1 \sectty A_2}
\]
Both $\Dee_1$ and $\Dee_2$ have as conclusion a typing for $e$; in general,
neither $A_k$ is a subtype of the other.  In general, we need both derivations because
the differences between $A_1$ and $A_2$ can lead to structural differences
in $\Dee_1$ and $\Dee_2$, and even in the contexts used inside
$\Dee_1$ and $\Dee_2$.

Assume a subtyping system in which the type $\Bits$ of bitstrings is refined
by $\Odd$ and $\Even$, denoting bitstrings of odd and even parity
(having an odd or even number of $1$s).  Appending a $1$ (written $x \cdot 1$)
should flip the parity, so
\[
      (\Lam{x} x \cdot 1) ~:~ (\Odd \arr \Even) \sectty (\Even \arr \Odd)
\]
In the typing derivation, we assume $x : \Odd$ inside the first branch of \Rsectintro
and $x : \Even$ inside the second:
\begin{mathdispl}
   \Infer{\Rsectintro}
      {
        \Infer{}
            {x : \Odd \entails x \cdot 1 : \Even}
            {\cdot \entails (\Lam{x} x \cdot 1) : (\Odd \arr \Even)}
        \\
        \Infer{}
            {x : \Even \entails x \cdot 1 : \Odd}
            {\cdot \entails (\Lam{x} x \cdot 1) : (\Even \arr \Odd)}
      }  
      {\cdot \entails (\Lam{x} x \cdot 1)
        ~:~
        (\Odd \arr \Even) \sectty (\Even \arr \Odd)}  
\end{mathdispl}
This function $\Lam{x} x \cdot 1$ is very simple; assuming the goal type
$(\Odd \arr \Even) \sectty (\Even \arr \Odd)$ is already known,
any reasonable typechecker should handle it without annotations
inside the function body.  But more complicated code might require internal
annotation.  Anyway, programmers should be able to write
unnecessary annotations if they want to.

Here, there is no single type we can write for the use of $x$ in
$x \cdot 1$: in the left side of the derivation, $x$ has type $\Odd$, and
in the right side, $x$ has type $\Even$.
To handle this issue, several systems with intersection types allow
\emph{lists} of types in annotations:
Forsythe~\citep{Reynolds88:Forsythe,Reynolds96:Forsythe}
and \citet[p.\ 21]{PierceThesis}
allow $\lambda$ arguments to be annotated with a sequence of types:
$\Lam{x : \Odd|\Even} x \cdot 1$;
the refinement typechecker SML-CIDRE~\citep{DaviesThesis}
allows terms to be annotated with lists of types, so we could write
$\Lam{x} (x : \Odd, \Even) \cdot 1$.

Intersection type inference is undecidable, but even intersection type \emph{checking}
is PSPACE-hard.  Unfortunately, unlike Hindley-Milner inference, which is intractable
in theory but polynomial in practice, intersection typechecking is expensive
in practice~\citep{Dunfield07:Stardust}.  A system
should, therefore, give the user a rich set of tools---such as annotations---to help make typechecking
practical.

Finally, in systems with indexed types and index-level variables, we need
to resolve a conflict between orderly variable scoping and intersection types.

Earlier work~\citep{Dunfield04:Tridirectional} described a \emph{contextual typing annotation}
that combined several features:

\begin{itemize}
\item \emph{contextuality}, guarding the type in the annotation with the context in which it makes sense;
\item \emph{multiplicity}, allowing more than one typing to be given, corresponding to different
  branches of intersection;
\item \emph{index variable linking}, maintaining index variable scoping even with
  intersection types.
\end{itemize}

We now recast the contextual typing annotation, separating it into constituent mechanisms
that collectively subsume it.
For contextuality, we introduce a \emph{guard} construct.
For multiplicity, we use a \emph{merge construct}~\citep{Dunfield12:elaboration}.
For index variable linking, we propose an existential binder.

\section{A Language with Guard Annotations}
\Label{sec:language}

\begin{figure}[htbp]
  \centering
  
  \begin{tabular}[t]{lr@{~~}c@{~~}ll}
        Types & $A, B, C$ & $\bnfas$ & $\Unit \bnfalt A \arr B \bnfalt A \sectty B$
\\[4pt]
        Terms & $e$ & $\bnfas$ & $x \bnfalt \unit \bnfalt \Lam{x}e \bnfalt e_1\,e_2$
\\ 
        & & & $\!\!\!\bnfalt \anno{e}{A}$ & standard (``right-hand'') annotation
\\
        & & & $\!\!\!\bnfalt \lanno{\Dec}{e}$ & guard (``left-hand'') annotation
\\[1pt]
        & & & $\!\!\!\bnfalt \Merge{e_1}{e_2}$ & merge
\\[4pt]
        Declarations & $\Dec$ & $\bnfas$ & $x : A$
\\[4pt]
        Contexts & $\Gamma$ & $\bnfas$ & $\cdot \bnfalt \Gamma, \Dec$
  \end{tabular}
  
  \caption{Types, terms, declarations and contexts}
  \FLabel{fig:syntax}
\end{figure}

We'll use a small functional language with intersection types,
a merge construct, and two kinds of annotations~(\Figureref{fig:syntax}).

\begin{figure}[t]
  \centering

  \begin{mathpar}
  \mathparleft{Subtyping}

  \vspace{-20pt} \\
     \Infer{\subRefl}
         {}
         {\Gamma \entails A \subtype A}
     \and
     \Infer{\subArr}
         {\Gamma \entails B_1 \subtype A_1
           \\
           \Gamma \entails A_2 \subtype B_2
           }
         {\Gamma \entails A_1 \arr A_2 \subtype B_1 \arr B_2
         }
     \and
     \Infer{\subSectL{k}}
         {\Gamma \entails  A_k \subtype B
         }
         {\Gamma \entails  A_1 \sectty A_2 \subtype B}
     \rulesep
     \Infer{\subSectR}
         {\Gamma \entails A \subtype B_1
           \\
           \Gamma \entails A \subtype B_2
         }
         { \Gamma \entails  A \subtype B_1 \sectty B_2
         }

  \mathparleft{Variables, $\Unit$, $\arr$}

  \vspace{-20pt} \\
    \Infer{\Rvar}
         { }
         {\Gamma_1, x : A, \Gamma_2 \entails x \has A}
    \and
    \Infer{\Runit}
         { }
         {\Gamma \entails \unit \against \Unit}
     \\
     \Infer{\Rarrintro}
         {\Gamma, x : A \entails e \against B}
         {\Gamma \entails \Lam{x} e \against A \arr B}
     \rulesep
     \Infer{\Rarrelim}
         {\Gamma \entails e_1 \has A \arr B
          \\
           \Gamma \entails e_2 \against A}
         {\Gamma \entails e_1\,e_2 \has B}
  \mathparleft{Intersection, subsumption, merge}
     \Infer{\Rsectintro}
         {\Gamma \entails e \against A_1
          \\
          \Gamma \entails e \against A_2}
         {\Gamma \entails e \against A_1 \sectty A_2}
     \rulesep
     \Infer{\Rsectelim{k}}
         {\Gamma \entails e \has A_1 \sectty A_2}
         {\Gamma \entails e \has A_k}
     \\
     \Infer{\Rsub}
         {\Gamma \entails e \has A
          \\
          \Gamma \entails A \subtype B}
         {\Gamma \entails e \against B}
     \and
     \Infer{\Rmerge{k}}
         {\Gamma \entails e_k \against A}
         {\Gamma \entails \Merge{e_1}{e_2} \against A}
     \rulesep
     \Infer{\Rmergeup{k}}
         {\Gamma \entails e_k \has A}
         {\Gamma \entails \Merge{e_1}{e_2} \has A}

  \mathparleft{Annotations}

  \vspace{-20pt} \\
     \Infer{\Ranno}
         {\Gamma \entails e \against A}
         {\Gamma \entails \anno{e}{A} \has A}
     \and
     \Infer{\Rguard{$\against$}}
         {\Gamma \entails x \against A
           \\
           \Gamma \entails e \against B}
         {\Gamma \entails \lanno{x : A}{e} \against B}
     \rulesep
     \Infer{\Rguard{$\has$}}
         {\Gamma \entails x \against A
           \\
           \Gamma \entails e \has B}
         {\Gamma \entails \lanno{x : A}{e} \has B}
  \end{mathpar}

  \vspace{-10pt}
  \caption{Subtyping and typing rules}
  \FLabel{fig:typing}
\end{figure}

\subsection{Bidirectional Typechecking}

Our type system is \emph{bidirectional}~\citep{Pierce00,Dunfield04:Tridirectional,Dunfield09}; see \citet{Dunfield09} for background.
This technique offers two major benefits over Damas-Milner type inference:
it works when annotation-free inference is undecidable, and
it produces more localized error messages.
Unlike constraint-based type inference, bidirectional typechecking does not inherently
require unification, nor the generation or manipulation of any constraints.
The basic idea of bidirectional typechecking is to separate checking of a
term against a known type from synthesis of an unknown type:
$\Gamma \entails e \against A$ means that $e$ checks against known
type $A$, while $\Gamma \entails e \has A$ means that $e$ synthesizes type $A$.
In the checking judgment, $\Gamma$, $e$ and $A$ are inputs to the typing algorithm.
In the synthesis judgment, $\Gamma$ and $e$ are inputs
and $A$ is output.  As usual, declarations of the form $x : A$ are added to $\Gamma$
through $\arr$-introduction (rule \Rarrintro); unlike in the Damas-Milner framework,
the type added is not a unification variable but a closed type.  In \Rarrintro, the type $A$
comes from the type $A \arr B$ that the $\lambda$-expression is checked against.

Bidirectional typechecking does need more type annotations than type inference.
However, by following the approach of~\citet{Dunfield04:Tridirectional}---%
checking introduction forms (like $\Lam{x} e$) and synthesizing the types of
elimination forms (like $e_1\,e_2$)---annotations are required only on redexes
like $(\Lam{x} e_1) e_2$ and recursive function declarations.  The need for annotations
is thus predictable; variations and refinements of this basic approach (such as trying
to synthesize the types of introduction forms) can further reduce the volume of annotations.

While we omit parametric polymorphism from this paper to focus on issues specific to intersection types,
it is straightforward to support parametric polymorphism, \emph{if} type abstraction
and application are explicit: given $e : \All{\alpha}B$, write $e[A]$ to instantiate $\alpha$ at $A$.
Such explicit instantiation is very inconvenient for the programmer.  It is possible, but not
entirely straightforward, to extend bidirectional typechecking with a form of existential
type variable~\citet{Dunfield09}.  This algorithm removes the need for explicit instantiation,
yet does not use unification, relying instead on a form of matching.

\subsection{Merging}

If either $e_1$ or $e_2$ has type $A$, then the merge $\Merge{e_1}{e_2}$ has type
$A$.  This construct first appeared in Forsythe~\citep{Reynolds96:Forsythe}.
Used in full generality~\citep{Dunfield12:elaboration}, the merge can encode a
variety of type system features, requires an elaboration-based semantics, and leads to
ambiguity if $e_1$ and $e_2$ have different operational behaviour.  In the
present setting, the purpose of the merge is just to let us annotate the same term in
different ways.  Used in this restricted fashion, erasing annotations from $e_1$ and $e_2$
yields the same term; thus, $e_1$ and $e_2$ have the same operational behaviour.
We discuss this point further in \Sectionref{sec:opsem}.

Since the merge is neither an introduction nor an elimination form, we can give
a synthesizing rule in addition to a checking rule; see \Figureref{fig:typing}.

Using a merge, the example $\Lam{x} x \cdot 1$ from the introduction
can be annotated as follows:
\[
  \Lam{x}\;
      \Merge
           {(x \cdot 1 : \Even)}
           {(x \cdot 1 : \Odd)}
\]
so it checks against $(\Odd \arr \Even) \sectty (\Even \arr \Odd)$.

\subsection{Guard Annotations}

Checking a function against intersection type leads to the function body being checked
several times against different return types, and even under varying typings of the
function's argument.  The latter motivates \emph{guards}.  A guard $\lanno{\Dec}{e}$
protects a term $e$ (say, the body of a function) with a declaration, so that the current typing context
$\Gamma$ must support the guarding
declaration $\Dec$.  For variable declarations $x : A$, this amounts to
$\Gamma \entails x \against A$.

We have both synthesis and checking typing rules for guards, ensuring that guards
can be placed anywhere the user chooses.

Using guards, we can annotate the example $\Lam{x} x \cdot 1$ so that
the choice of branch is fully determined:
\[
  \Lam{x}\;
      \Merge
           {\big(\lanno{x : \Odd}(x \cdot 1 : \Even)\big)}
           {\big(\lanno{x : \Even}(x \cdot 1 : \Odd)\big)}
\]

\subsection{Free Annotation} \Label{sec:noindex-free-annotation}

Given a term $e$ that can be typed with the bidirectional rules---that is, a term that already
has enough annotations for the typechecker---the user can freely choose to put in more
annotations, either right-hand annotations or guards.  If different annotations are needed
in the subderivations of \Rsectintro, the user can duplicate the term with a merge.

\section{Operational Semantics of Annotations and Merges}
\Label{sec:opsem}

We are working with a bidirectional type system.  For such a system, the easiest way to translate
the usual notions of preservation and progress is to give an equivalent type assignment system.
That is, we want rules deriving $\Gamma \entails e : A$ such that:

\begin{enumerate}[(1)]
\item if $\Gamma \entails e \against A$, then $\Gamma \entails e : A$;
\item if $\Gamma \entails e \has A$, then $\Gamma \entails e : A$.
\end{enumerate}

To show \emph{equivalence}, we would also need to consider the other direction: given some
$\Gamma \entails e : A$, can we derive appropriate bidirectional judgments?  We need not
answer this question to describe the operational semantics;
see \citet{Dunfield04:Tridirectional} for one answer.

\subsection{Left- and Right-Hand Annotations}

For standard and guard annotations, we can give a small-step operational semantics, but we
have a choice of approaches.
The first approach---standard in typed functional languages---is to erase the annotations,
so that the operational semantics does not mention them at all.
In this approach, we define an erasure function $\erasetypes{e}$:
\begin{displ}
  \begin{tabular}[t]{rcl}
      $\erasetypes{x}$ & $=$ & $x$
\\  $\erasetypes{\unit}$ & $=$ & $\unit$
\\  $\erasetypes{\Lam{x} e}$ & $=$ & $\Lam{x} \erasetypes{e}$
\\  $\erasetypes{e_1\,e_2}$ & $=$ & $\erasetypes{e_1}\,\erasetypes{e_2}$
\\[2pt]
  $\erasetypes{\anno{e}{A}}$ & $=$ & $\erasetypes{e}$
\\  $\erasetypes{\lanno{\Dec}{e}}$ & $=$ & $\erasetypes{e}$
  \end{tabular}
\end{displ}
The typing rules for left- and right-hand annotations have premises typing the inner expression $e$,
so this erasure function clearly preserves types.
Since $\anno{e}{A}$ and $\lanno{\Dec}{e}$ get erased, they need no reduction rules.

The second approach is to extend the definition of values:
\begin{displ}
  \begin{tabular}[t]{rcl}
    $v$ &$\bnfas$& $x \bnfalt \Lam{x} e \bnfalt \anno{v}{A} \bnfalt \lanno{\Dec}{v}$
  \end{tabular}
\end{displ}
and give reduction rules that drop the annotations.
\vspace{-12pt}
\begin{mathdispl}
  \Infer{}
      {}
      {\anno{e}{A} ~\step~ e}
  \and
  \Infer{}
      {}
      {\lanno{\Dec}{e} ~\step~ e}
\end{mathdispl}

As just noted, the typing rules for these constructs have premises typing $e$, so we can readily
extend an existing proof of type preservation to handle these new reduction rules.
Moreover, progress is maintained: if $\anno{e}{A}$ is well-typed and not a value,
then $e$ is not a value, and we can use the induction hypothesis on the premise typing $e$
to show that $e$, which $\anno{e}{A}$ steps to, is well-typed.

\subsection{Merges}

We still have to deal with the merge construct.
In this paper, we are interested in merges only as an annotation mechanism;
merges $\Merge{e_1}{e_2}$ used for that purpose
must have similar branches $e_1$ and $e_2$.  That is,
$e_1$ and $e_2$ are differently-annotated versions of some unannotated
``parent'' $e$.  We can apply the first approach---erasing annotations before
evaluation---by extending the definition of erasure:
\begin{displ}
  \begin{tabular}[t]{rcll}
      $\erasetypes{\Merge{e_1}{e_2}}$ & $=$ & $\erasetypes{e_1}$
      & if $\erasetypes{e_1} = \erasetypes{e_2}$
  \end{tabular}
\end{displ}
If merges are indeed always used purely as an annotation mechanism, the side condition will always hold.

We can also try to apply the second approach of reducing annotations during evaluation,
with reduction rules
\vspace{-18pt}
\begin{mathdispl}
  \Infer{}
      {}
      {\Merge{e_1}{e_2} ~\step~ e_1}
  \and
  \Infer{}
      {}
      {\Merge{e_1}{e_2} ~\step~ e_2}
\end{mathdispl}
These reduction rules introduce nondeterminism.
If we continue to assume, however, that $e_1$ and $e_2$
are differently-annotated versions of the same term, this nondeterminism is harmless:
we will end up with the same value, no matter which rule we apply.
Thus, we could omit one of the preceding two reduction rules, removing
the nondeterminism.

\subsection{Merges Without Restriction}

Giving an operational semantics to arbitrary uses of merge, where $e_1$ and $e_2$
may be entirely different, is more involved.  \citet{Dunfield12:elaboration} gives such
a semantics in two parts.  The first part is a system of reduction rules, including the two
above, for which the usual notions of preservation and progress fail to hold.  The second
part is an elaboration (more involved than erasure) to target terms $M$, which are
evaluated by a completely standard operational semantics.
This elaboration translates intersections to products (and unions to sums); the elaborating
version of \Rsectintro generates a pair, and the elaborating versions of \Rsectelim{k}
generate projections.

The central result in that paper is that if $e$
elaborates to $M$, evaluating the target term $M$ produces a value $W$ such that
\emph{there exists} some sequence of reductions of $e$ that yield an equivalent value $v$---one such that
$v$ elaborates to $W$.

\section{Extension to Indexed Types with Index Variables}
\Label{sec:indexed}

The above constructs collectively yield annotations that work when terms are
checked repeatedly under different contexts.  But this does not quite subsume contextual typing
annotations~\citep{Dunfield04:Tridirectional}, which were designed in the setting of
a system with indexed types as well as intersection (and union) types, and treat index-level
variables $a$, $b$ differently from term-level variables ($x$, $y$, etc.).

After setting the stage with some background on indexed types, we look at two alternatives
in language design and show how our approach works for both; for one of the alternatives,
one more language construct is needed.

\subsection{Indexed Types}

The kind of indexed types we consider here is exemplified by DML
\citep{Xi99popl,XiThesis}, and some of its descendants
\citep{Dunfield03:IntersectionsUnionsCBV,Dunfield04:Tridirectional,DunfieldThesis},
which added several features, most notably intersection and union types.
In these systems, users can index datatypes with \emph{index expressions} from a constraint domain
with decidable equality (at least).  The canonical example of such a domain is linear inequalities over integers;
dimensions (metres, seconds, etc.) form another useful domain~\citep{Dunfield07:Stardust}.

In contrast to dependent types, indices do not appear in terms $e$ (except within annotations)
and disappear completely during compilation; terms $e$ can never appear in indices.
Indexed type systems are parametric in the index domain.

We mostly follow (\Figureref{fig:indexed}) the notation of
\citet{Dunfield04:Tridirectional}.  Index expressions $i$ have
index sorts $\gamma$ (e.g.\ $\intIS$ or $\dimIS$); $a$ and $b$ are index-level
variables standing for index expressions; $P$ stands for propositions
over index expressions, such as equality $\indexeq$.
Types are extended with indexed datatypes
$\tau(i)$ (where $\tau$ is some inductive datatype $\tyname{list}$, $\tyname{tree}$, etc.)
and universal quantification over index variables.
(The use of $\UnivSym$ is traditional and, to readers
used to dependent types, has the advantage of suggesting the appropriate quantifier,
with the disadvantage of being easily confused with a genuine dependent $\Pi$.)
In practice, we also need existential quantification $\Exis{a}{\gamma} A$,
which we omit since it has no effect on the techniques described in this paper.

We assume that the constraint domain defines when two kinds of judgments
are derivable: $\Gamma \entails P$ (index assumptions in $\Gamma$
entail index proposition $P$) and $\Gamma \entails i : \gamma$ (index expression $i$,
which might include index variables declared in $\Gamma$, has index sort $\gamma$).
The only mandatory syntax in an index domain is $\indexeq$,
which is needed for subtyping.
In practice, the index expressions $i$ might include literal integers and operations like $i + i$;
the index propositions would include comparisons like $i < i$.

Practical bidirectional typechecking with indexed types, unlike bidirectional typing for the language
in previous sections of this paper, does involve constraints.  However, these constraints are just
over index expressions, not types, so the basic structure of the bidirectional approach need not change.
For a discussion of the techniques involved, see \citet{XiThesis} and \citet{DunfieldThesis}.

\begin{figure}[t]
  \centering
  
    \begin{tabular}[t]{lr@{~~}c@{~~}ll}
        Index variables & $a, b$ &
\\[2pt]
        Index sorts & $\gamma$ & $\bnfas$ & $\intIS \bnfalt \cdots$
\\[2pt]
        Index expressions & $i$ & $\bnfas$ & $a \bnfalt \cdots$
\\[2pt]
        Index propositions & $P$ & $\bnfas$ & $i \indexeq i \bnfalt \cdots$
\\[4pt]
        Types & $A, B, C$ & $\bnfas$ & $\cdots \bnfalt \tau(i) \bnfalt \Univ{a}{\gamma} A$
\\[4pt]
        Declarations & $\Dec$ & $\bnfas$ & $\cdots \bnfalt a : \gamma$ %
  \end{tabular}

  \caption{Indexed types}
  \FLabel{fig:indexed}
\end{figure}

\subsection{Indexed Types Without Binders}

The most syntactically economical formulation of indexed types does not
extend the term syntax at all (apart from the extension of the type language,
which changes the syntax of annotations).  Its subtyping and typing rules are
shown in \Figureref{fig:indexed-without-binders}.  Implicitly,
we assume that \subUnivR and \Runivintro rename the variable introduced
into the context if it already occurs in $\Gamma$.

\begin{figure}[t]
  \centering
  
  \begin{mathpar}
    \Infer{\subIndex}
        {\Gamma \entails i_1 \indexeq i_2}
        {\Gamma \entails \tau(i_1) \subtype \tau(i_2)}
    \and
    \Infer{\subUnivL}
        {\Gamma \entails i : \gamma
          \\
          \Gamma \entails [i/a]A \subtype B}
        {\Gamma \entails \Univ{a}{\gamma} A \subtype B}
    \rulesep
    \Infer{\subUnivR}
        {\Gamma, b : \gamma \entails A \subtype B}
        {\Gamma \entails A \subtype \Univ{b}{\gamma} B}
     \\
     \Infer{\Runivintro}
         {\Gamma, a : \gamma \entails e \against A}
         {\Gamma \entails e \against \Univ{a}{\gamma} A}
     \rulesep
     \Infer{\Runivelim}
         {\Gamma \entails e \has \Univ{a}{\gamma} A
         \\
         \Gamma \entails i : \gamma}
         {\Gamma \entails e \against [i/a] A}
  \end{mathpar}
  
  \vspace{-10pt}
  \caption{Subtyping and typing for indexed types (without term-level binders)}
  \FLabel{fig:indexed-without-binders}
\end{figure}

\begin{figure}[t]
  \centering

  \begin{mathpar}
    \Infer{}
        {\Gamma \entails i : \gamma
          \\
          \Gamma \entails [i/b]e \against A}
        {\Gamma \entails \existybind{b}{\gamma} e \against A}
    \rulesep
    \Infer{}
        {\Gamma \entails i : \gamma
          \\
          \Gamma \entails [i/b]e \has A}
        {\Gamma \entails \existybind{b}{\gamma} e \has A}
  \end{mathpar}

  \vspace{-10pt}
  \caption{Typing for the \textkw{some} binder}
  \FLabel{fig:some}
\end{figure}

Is that the end of the story?  No.  We have actually introduced a serious problem:  What does
it mean to mention an index variable $a$ in an annotation when there are no term-level binders?
The only thing that binds $a$ is $\UnivSym$, and the scope of the
binder $\Univ{a}{\gamma} A$ is just $A$.  And what if the implicit condition in \subUnivR
and \Runivintro is triggered and we have to rename the variable?  The user would be unable
to refer to the variable in annotations.

One way to solve this is to introduce an odd sort of binding construct,
$\existybind{a'}{\gamma} e$, which binds its variable $a'$ to some unwritten index
expression---one chosen by the typechecker to make everything work out.
An example:
\[
  \big(\Lam{x} \dots (\existybind{b}{\gamma} \lanno{x : \List(b*2)}{e}   ) \dots\big)
  ~\against~
  \Univ{a}{\intIS} \List(a*2) \arr \List(a)
\]
Within the inner term $e$, we can write (right-hand) annotations that mention $b$:
the typechecker chooses $b$ to be $a$, which satisfies the guard condition
$x \against \List(b*2)$.

The typing rules in \Figureref{fig:some} substitute
an index $i$ for $b$ in $e$, where $i$ is well-sorted in the actual context
$\Gamma$.  Thus, all annotations that mention $b$ will be renamed
so they make sense under $\Gamma$.
These rules do not require $i$ to be a variable: the following code is acceptable,
choosing $i$ to be $a * 2$.
\[
  \big(\Lam{x} \dots (\existybind{b}{\gamma} \lanno{x : \List(b)}{e}   ) \dots\big)
  ~\against~
  \Univ{a}{\intIS} \List(a*2) \arr \List(a)
\]
Non-renaming substitutions achieve a measure of robustness: the type being checked
against can, in some circumstances, change without requiring changes to internal
annotations.

\subsection{Indexed Types With Binders}

Alternatively, we can have an explicit term-level introduction form for $\Univ{a}{\gamma} A$:
  \begin{mathpar}
    \Infer{{\Runivintro}\text{-explicit}}
        {\Gamma, b : \gamma \entails e \against A}
        {\Gamma \entails \explicitunivintro{b}{\gamma}{e} \against A}
  \end{mathpar}
\citet{Dunfield04:Tridirectional} did not take this route, because
typing would fail for intersections of differently-quantified types.  For example,
the first conjunct of $(\Univ{a}{\gamma} A \arr A') \sectty (B \arr B')$
can type a term if it has a binder (for $a$), but the second conjunct cannot type a
term with a binder (since $B \arr B'$ has no $\UnivSym$).  With our merge construct,
we can write the term twice, with and without a binder.

\subsection{Free Annotation Revisited} \Label{sec:index-free-annotation}

Whether we have \textkw{some} binders or $\explicitunivintrosym$ binders,
we maintain the property mentioned in \Sectionref{sec:noindex-free-annotation}:
the user can always add an extra annotation if desired.

\begin{itemize}
\item  If we have \textkw{some}, the user will need to add a \textkw{some} binder
  for any index variable mentioned in annotations (left- and right-hand).
\item  If we have $\explicitunivintrosym$ and rule {\Runivintro}\text{-explicit}
  instead of \Runivintro, the user must already have put in the $\explicitunivintrosym$
  forms, and can refer to those bound index variables in annotations.
\end{itemize}

\vspace{-10pt}

\section{Comparison to Contextual Typing Annotations}
\Label{sec:comparison}

\begin{figure}[t]
  \centering
  
  \begin{mathpar}
        \Infer{\RctxSempty}
            {}
            {(\cdot \entails A) \ctxsubtype (\Gamma \entails A)}
        \and
        \Infer{\RctxSivar}
             { \Gamma \entails i : \gamma_0
               \\
               ([i/a]\,\Gamma_0 \entails [i/a]\,A_0) \ctxsubtype (\Gamma \entails A) }
             {   (a{:}\gamma_0, \Gamma_0 \entails A_0) \ctxsubtype (\Gamma \entails A)  }
         \and
         \Infer{\RctxSpvar}
             {  \Gamma \entails \Gamma(x) \subtype B_0
               \\
               (\Gamma_0 \entails A_0) \ctxsubtype (\Gamma \entails A)}
             { (x{:}B_0, \Gamma_0 \entails A_0) \ctxsubtype (\Gamma \entails A)}
          \vspace{6pt}
          \\
          \Infer{\Rctxanno}
              {(\Gamma_0 \entails A_0) \ctxsubtype (\Gamma \entails A)
                \\
                \Gamma \entails e \against A}
              {\Gamma \entails (e : \dots, (\Gamma_0 \entails A_0), \dots) \has A}    
  \end{mathpar}

  \vspace{-10pt}
  \caption{Rules for contextual typing annotations}
  \FLabel{fig:ctxanno}
\end{figure}

\begin{figure}[t]
  \centering
  
  \begin{array}[t]{rclll}
      \encode{x} &=& x
  \\
      \encode{\unit} &=& \unit
  \\
      \encode{\Lam{x} e} &=& \Lam{x} \encode{e}
  \\
      \encode{e_1\,e_2} &=& \encode{e_1}\;\encode{e_2}
  \\
      \encode{e : (\Gamma_1{\entails}A_1), \dots, (\Gamma_n{\entails}A_n)}
            &=&
            \Merge{\Merge{\encode{\Gamma_1 \entails A_1}}{\dots}}{\encode{\Gamma_n \entails A_n}}
    \\ 
      && \text{where $\encode{\Dec_1, \dots, \Dec_n \entails A}
        =  \lanno{\Dec_1}{\dots \lanno{\Dec_n}{\anno{\encode{e}}{A}}}$}
  \end{array}

  \vspace{-3pt}
 
  \caption{Translating contextual typing annotations} %
  \FLabel{fig:translate}
\end{figure}

We briefly review contextual typing annotations, introduced by
\citet{Dunfield04:Tridirectional}.  Such an annotation has a list $As$ of
typings $(\Gamma_1 \entails A_1, \dots, \Gamma_n \entails A_n)$.
The typing rule \Rctxanno (\Figureref{fig:ctxanno})
chooses a typing $\Gamma_0 \entails A_0$ and then uses a
\emph{contextual subtyping relation}
$(\Gamma_0 \entails A_0) \ctxsubtype (\Gamma \entails A)$,
which is derivable when $\Gamma$ is at least as strong as $\Gamma_0$,
that is, when $\Gamma$ satisfies all assumptions listed in $\Gamma_0$.
Declarations in $\Gamma_0$ thus should correspond to a sequence of guard
annotations.  Declarations of index variables in $\Gamma_0$, however,
are treated differently: the rule \RctxSivar behaves like the typing rules
for the \textkw{some} binder (\Figureref{fig:some}), effectively binding
variables declared in $\Gamma_0$ so they can be used in $A_0$.

In hindsight, contextual typing annotations combine all the mechanisms
in this paper---guard annotations, standard annotations, and merges:
program variable declarations $x : A$ in $\Gamma_0$ correspond to
a sequence of guard annotations, the type $A_0$ corresponds to a
standard annotation, and the multiplicity of typings corresponds to
merges.  Translating contextual typing annotations (\Figureref{fig:translate})
preserves typing:

\medskip

\begin{theorem}[Encoding Contextual Typing Annotations] ~\\
  If $\Gamma \entails e \against A$ (resp.\ $\has$) with rule \Rctxanno available
  then
  $\Gamma \entails  \encode{e'} \against A$ (resp.\ $\has$)
  without applying rule \Rctxanno.
\end{theorem}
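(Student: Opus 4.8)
The plan is to proceed by simultaneous induction on the derivation of $\Gamma \entails e \against A$ and $\Gamma \entails e \has A$ in the system in which \Rctxanno is available, establishing both statements at once. For every term former other than the contextual annotation the translation commutes with the syntax (for instance $\encode{\Lam{x} e} = \Lam{x} \encode{e}$ and $\encode{e_1\,e_2} = \encode{e_1}\,\encode{e_2}$), so each corresponding typing rule---\Rvar, \Rarrintro, \Rarrelim, \Rsectintro, \Rsub, and the rules for indexed types---is handled routinely: apply the induction hypothesis to the premises and re-apply the identical rule, which is still present in the target system. No step of the translation produces a contextual annotation, so \Rctxanno is never reintroduced. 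The checking statement and the synthesis statement are genuinely interdependent: a checking use of a contextual annotation arises only through \Rsub from its synthesis, and that case simply appeals to the synthesis half of the induction hypothesis.

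The sole substantive case is \Rctxanno, whose conclusion $\Gamma \entails (e : As) \has A$ is obtained from some chosen typing $(\Gamma_0 \entails A_0)$ of the list $As = (\Gamma_1 \entails A_1), \dots, (\Gamma_n \entails A_n)$, a contextual-subtyping derivation $(\Gamma_0 \entails A_0) \ctxsubtype (\Gamma \entails A)$, and a premise $\Gamma \entails e \against A$. Because $\encode{(e : As)}$ is the left-nested merge of the $\encode{\Gamma_j \entails A_j}$, the first move is to apply the synthesizing merge rule \Rmergeup{k} as many times as needed to descend to the leaf for the chosen typing, reducing the goal to $\Gamma \entails \encode{\Gamma_0 \entails A_0} \has A$. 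The outer induction hypothesis applied to $\Gamma \entails e \against A$ already supplies $\Gamma \entails \encode{e} \against A$ in the target system. Throughout, the wrapper $\lanno{\Dec}{\cdot}$ produced by the encoding must be read as a guard when $\Dec$ is a program-variable declaration $x : B$, and as a \textkw{some} binder (\Figureref{fig:some}) when $\Dec$ is an index-variable declaration $a : \gamma$, since only the latter has a typing rule for index variables.

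The heart of the proof is a lemma, established by induction on the contextual-subtyping derivation: if $(\Gamma_0 \entails A_0) \ctxsubtype (\Gamma \entails A)$ and $\Gamma \entails t \against A$ holds in the target system for some $t$ mentioning no variable declared in $\Gamma_0$, then $\Gamma \entails \lanno{\Dec_1}{\dots \lanno{\Dec_m}{\anno{t}{A_0}}} \has A$, where $\Gamma_0 = \Dec_1, \dots, \Dec_m$. Each contextual-subtyping rule peels off the leading declaration and is matched by exactly one target construct. The base rule \RctxSempty forces $\Gamma_0 = \cdot$ and $A_0 = A$, so the goal is $\anno{t}{A} \has A$, which is \Ranno applied to the hypothesis $t \against A$. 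The rule \RctxSpvar corresponds to the synthesizing guard rule \Rguard{$\has$}: its side condition $\Gamma \entails \Gamma(x) \subtype B_0$ yields the guard premise $\Gamma \entails x \against B_0$ via \Rvar followed by \Rsub, and the remaining premise is the induction hypothesis. The rule \RctxSivar corresponds to the synthesizing \textkw{some} rule, reusing the very index $i$ that \RctxSivar selects. Instantiating the lemma at $t = \encode{e}$ discharges the \Rctxanno case and completes the induction.

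I expect the substitution bookkeeping in the \RctxSivar step to be the main obstacle. There, \RctxSivar substitutes $[i/a]$ into $\Gamma_0$ and $A_0$, whereas the \textkw{some} rule substitutes $[i/a]$ into its whole body; reconciling the two requires checking that substitution commutes with the encoding and that $[i/a]\encode{e} = \encode{e}$, which holds because $\encode{e}$ is typed in $\Gamma$ and so mentions no variable of $\Gamma_0$, making the substitution vacuous on the inner term. Once these identities are in place, the accumulated index substitutions turn $A_0$ into $A$ precisely at the \RctxSempty leaf, so the innermost annotation is $\anno{\encode{e}}{A}$, \Ranno applies, and both the guards and the \textkw{some} binders carry the synthesized type $A$ unchanged back up to the top of the merge.
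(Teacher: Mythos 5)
Your proof is correct, and its skeleton matches the paper's: induction on the typing derivation, all cases except \Rctxanno handled by re-applying the identical rule to the translated subterms, and the \Rctxanno case discharged by selecting the $k$th branch of the merge with \Rmergeupx, closing the innermost annotation with \Ranno, and wrapping with guards. The genuine difference is that you factor the heart of the \Rctxanno case into a separate lemma proved by induction on the contextual-subtyping derivation $(\Gamma_0 \entails A_0) \ctxsubtype (\Gamma \entails A)$, whereas the paper simply asserts that ``$m$ applications of \Rguard{$\has$}'' finish the job. Your lemma does real work that the paper's proof glosses over: (i) it makes explicit that each guard premise $\Gamma \entails x \against B_0$ must be manufactured from the \RctxSpvar side condition $\Gamma \entails \Gamma(x) \subtype B_0$ via \Rvar followed by \Rsub; (ii) it handles index-variable declarations, which \Rguard{$\has$} cannot type at all, by reading the corresponding wrapper as a \textkw{some} binder and reusing the index $i$ chosen by \RctxSivar---the paper's proof never mentions \RctxSivar, the rules of \Figureref{fig:some}, or any substitution; and (iii) it resolves the mismatch between what the translation actually writes in the branch (the annotation $A_0$ from the contextual typing) and what the paper's proof claims is written there (the conclusion type $A$): the two coincide only after the accumulated index substitutions have been pushed through, which in your argument happens exactly at the \RctxSempty leaf, where your identity $[i/a]\encode{e} = \encode{e}$ (justified by the scoping convention) is also needed. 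In short, the paper's proof as literally written is only adequate when $\Gamma_0$ contains no index-variable declarations; your version is a corrected and completed form of the same argument, at the cost of stating and proving one auxiliary lemma.
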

\begin{proof}  By induction on the derivation.  All cases are
  straightforward except when \Rctxanno concludes the derivation.

  In that case, apply the i.h.\ resulting in $\encode{e_0'}$.
  This application of \Rctxanno uses one of the contextual typings, say
  $(\Gamma_k \entails A_k)$ where $\Gamma_k = \Dec_1, \dots, \Dec_n$;
  the $k$th branch of the merge created
  by $\encode{-}$ is $\lanno{\Dec_1}{\dots \lanno{\Dec_m}{\anno{\encode{e_0'}}{A}}}$.

  By rule \Ranno, $\Gamma \entails \encode{e_0'}{A} \has A$.
  
  By $m$ applications of \Rguard,
  \[
     \Gamma \entails~ \lanno{\Dec_1}{\dots \lanno{\Dec_m}{\anno{\encode{e_0'}}{A}}}
     ~~\has~ A
  \]
  Finally, apply \Rmergeupx as needed to pick out the
  $k$th branch of the merge created by $\encode{-}$.
\end{proof}

Given that we subsume contextual typing annotations, which approach
should be preferred when designing a language?  It is hard to give a universal
answer.  Generally speaking, simpler constructs are better than complicated
ones, but fewer constructs are better than many.  By the former criterion, the
mechanisms proposed in this paper win; by the latter, contextual typing
annotations win.  The particular design setting matters: if we need
some of these mechanisms already, their marginal cost is reduced.  This was the
case in the work that directly inspired this paper, elaboration-based typing of
intersections and unions~\citep{Dunfield12:elaboration}, where the merge
construct was already present.

\section{Comparison to Contextual Types}
\Label{sec:contextual-types}

There are several approaches to typing open code.  In one such approach,
contextual modal type theory~\citep{Nanevski08:CMTT}, the contextual type $\contextualtype{A}{\Psi}$
represents data of type $A$ closed under a context $\Psi$.  Providing a substitution for the
variables in $\Psi$ allows a term of type $\contextualtype{A}{\Psi}$
to yield a term of type $\contextualtype{A}{\cdot}$, closed
under the empty context---that is, a closed term.

Contextual types appear to subsume both guard annotations and our use of merges.  For example, instead of the
guard annotations in
$
  \Lam{x}\;
      \Merge
           {\big(\lanno{x : \Odd}(x \cdot 1 : \Even)\big)}
           {\big(\lanno{x : \Even}(x \cdot 1 : \Odd)\big)}
$
we could write
\[
\begin{array}[t]{l}
    \Lam{x}\;
        \Let{r}{
        (y \cdot 1)
        :
        \contextualtype{\Even}{y:\Odd}
        \sectty
        \contextualtype{\Odd}{y:\Even}
        }
        {}  
   \\
   \hspace{30pt}
          \contextualinst{r}{x/y}
\end{array}
\]
Checking $(y \cdot 1)$ against the first conjunct of the (ordinary right-hand)
annotation, $\contextualtype{\Even}{y:\Odd}$, shows that $(y \cdot 1)$ has
type $\Even$ when $y$ is substituted with a value of type $\Odd$.
The second conjunct is symmetric.  In the body of the \textkw{let}, we plug in $x$.
When we check the whole function against $(\Odd \arr \Even) \sectty (\Even \arr \Odd)$,
the variable $x$ will have type $\Odd$ in one subderivation of \Rsectintro,
and type $\Even$ in the other.  In each subderivation, using intersection elimination
gives $r$ a contextual type that can be eliminated by substituting $x$ for $y$.

Contextual types are versatile.  For example, they enable us to lift the binding of 
$r$ outside the function, and instantiate $r$ with different concrete contexts
(different substitutions for $y:\Even$) at several program points.
Extending typecheckers and compilers with such types, however,
is nontrivial~\citep{Pientka08:POPL}.  Introducing contextual types just to
support type annotations seems extravagant.  If contextual types are
already available in a language, of course, it could make sense to encode
the annotation mechanisms of this paper as contextual types, or
for programmers to write contextual types directly.

\section{Implementation}  \Label{sec:implementation}

Several of the ideas described above have been implemented in Stardust,
a typechecker (and compiler) for a small language in the Standard ML tradition.
In addition to intersection types and indexed types, Stardust supports union types,
datasort refinements and parametric polymorphism.

The implementation, with some examples, can be downloaded from
\url{http://stardust.qc.com}.  The syntax diverges slightly from the above presentation:

\begin{itemize}
\item the left-hand annotation $\lanno{d}{e}$ is written $\textkw{where}~d~\textkw{do}~e$;
\item type annotations can be given separately from their bindings; these annotations
  are similar to contextual type annotations, but with the $\lanno{d}{e}$ syntax for variable typings;
\item the \textkw{some} binder is (presently) only implemented for separate type annotations on
  bindings, not as an ordinary expression form.
\end{itemize}

An early version of Stardust was described in \citet{Dunfield07:Stardust,DunfieldThesis},
but the current version adds a number of important features, incorporating ideas
from \citet{Dunfield09,Dunfield12:elaboration}.

\section{Usability}  \Label{sec:usability}

We briefly consider some practical issues around the usability of our annotation
mechanisms.

The approach to bidirectional typechecking developed in \citet{Dunfield04:Tridirectional}
guarantees that right-hand annotations are needed only at redexes (most commonly,
recursive function declarations).  Once the user decides to add an annotation (whether
strictly required for typechecking, or for the purpose of documentation), the next step---of
adding a merge with left-hand annotations (or perhaps a contextual typing annotation)---is
fully determined: if the term needs to have different types under different contexts,
the user must add a merge and left-hand annotations.

The overall size of the annotations is hard to characterize.  Some examples of annotated
programs can be found in \citet{XiThesis} for bidirectional typechecking with indexed types
(but without intersections or contextual typings), \citet{DaviesThesis} for bidirectional typechecking
of refinement types and intersection types, and \citet{DunfieldThesis} for bidirectional typechecking
of refinement types, indexed types, intersection and union types.  Our experience with our
implementation is that for nontrivial uses of intersection and union types,
the performance of typechecking becomes highly problematic long before the
annotations become unacceptably long.  It is difficult to see how truly complex annotations could
be substantially reduced: if the annotations are complex, it is probably because the program specification
is nontrivial.

\subsection*{Acknowledgments}

I'd like to thank the ITRS reviewers for their comments and suggestions,
for both the pre- and post-proceedings versions; the workshop participants
for the discussion during the talk; Neelakantan R.\ Krishnaswami, for discussions and encouragement;
Frank Pfenning, for so many things.

{
\addtolength{\bibsep}{-1.0pt}
\bibliographystyle{plainnat}  %
\bibliography{intcomp}}

\begin{thebibliography}{19}
\providecommand{\natexlab}[1]{#1}
\providecommand{\url}[1]{\texttt{#1}}
\expandafter\ifx\csname urlstyle\endcsname\relax
  \providecommand{\doi}[1]{doi: #1}\else
  \providecommand{\doi}{doi: \begingroup \urlstyle{rm}\Url}\fi

\bibitem[Coppo et~al.(1981)Coppo, Dezani-Ciancaglini, and
  Venneri]{CDV81:IntersectionTypes}
M.~Coppo, M.~Dezani-Ciancaglini, and B.~Venneri.
\newblock Functional characters of solvable terms.
\newblock \emph{Zeitschrift f. math. Logik und Grundlagen d. Math.},
  27:\penalty0 45--58, 1981.
\newblock \href{http://dx.doi.org/10.1002/malq.19810270205}{doi:{\tt
  10.1002/malq.19810270205}}.

\bibitem[Coquand et~al.(2012)Coquand, Huet, et~al.]{Coq:website}
Thierry Coquand, G{\'e}rard Huet, et~al.
\newblock Coq homepage, 2012.
\newblock \href{http://coq.inria.fr/}{http://coq.inria.fr/}.

\bibitem[Davies(2005)]{DaviesThesis}
Rowan Davies.
\newblock \emph{Practical Refinement-Type Checking}.
\newblock PhD thesis, Carnegie Mellon University, 2005.
\newblock CMU-CS-05-110.

\bibitem[Dunfield(2007{\natexlab{a}})]{Dunfield07:Stardust}
Joshua Dunfield.
\newblock Refined typechecking with {Stardust}.
\newblock In \emph{Programming Languages meets Program Verification (PLPV
  '07)}, 2007{\natexlab{a}}.
\newblock \href{http://dx.doi.org/10.1145/1292597.1292602}{doi:{\tt
  10.1145/1292597.1292602}}.

\bibitem[Dunfield(2007{\natexlab{b}})]{DunfieldThesis}
Joshua Dunfield.
\newblock \emph{A Unified System of Type Refinements}.
\newblock PhD thesis, Carnegie Mellon University, 2007{\natexlab{b}}.
\newblock CMU-CS-07-129.

\bibitem[Dunfield(2009)]{Dunfield09}
Joshua Dunfield.
\newblock Greedy bidirectional polymorphism.
\newblock In \emph{ML Workshop}, pages 15--26, 2009.
\newblock \href{http://dx.doi.org/10.1145/1596627.1596631}{doi:{\tt
  10.1145/1596627.1596631}}.

\bibitem[Dunfield(2012)]{Dunfield12:elaboration}
Joshua Dunfield.
\newblock Elaborating intersection and union types.
\newblock In \emph{Int'l Conf. Functional Programming}, 2012.
\newblock \href{http://arxiv.org/abs/1206.5386}{arXiv:{\tt 1206.5386 [cs.PL]}}.

\bibitem[Dunfield and Pfenning(2003)]{Dunfield03:IntersectionsUnionsCBV}
Joshua Dunfield and Frank Pfenning.
\newblock Type assignment for intersections and unions in call-by-value
  languages.
\newblock In \emph{Found. Software Science and Computation Structures (FoSSaCS
  '03)}, pages 250--266, 2003.
\newblock \href{http://dx.doi.org/10.1007/3-540-36576-1_16}{doi:{\tt
  10.1007/3-540-36576-1\_16}}.

\bibitem[Dunfield and Pfenning(2004)]{Dunfield04:Tridirectional}
Joshua Dunfield and Frank Pfenning.
\newblock Tridirectional typechecking.
\newblock In \emph{Principles of Programming Languages}, pages 281--292, 2004.
\newblock \href{http://dx.doi.org/10.1145/964001.964025}{doi:{\tt
  10.1145/964001.964025}}.

\bibitem[Gentzen(1969)]{Gentzen35}
Gerhard Gentzen.
\newblock Investigations into logical deduction.
\newblock In M.~Szabo, editor, \emph{Collected papers of {Gerhard Gentzen}},
  pages 68--131. North-Holland, 1969.

\bibitem[Nanevski et~al.(2008)Nanevski, Pfenning, and Pientka]{Nanevski08:CMTT}
Aleksandar Nanevski, Frank Pfenning, and Brigitte Pientka.
\newblock Contextual modal type theory.
\newblock \emph{ACM Trans. Computational Logic}, 9\penalty0 (3), 2008.
\newblock \href{http://dx.doi.org/10.1145/1352582.1352591}{doi:{\tt
  10.1145/1352582.1352591}}.

\bibitem[Pfenning et~al.(2012)Pfenning, Sch{\"u}rmann, et~al.]{Twelf:website}
Frank Pfenning, Carsten Sch{\"u}rmann, et~al.
\newblock Twelf homepage, 2012.
\newblock
  \href{http://twelf.org/wiki/Main_Page}{http://twelf.org/wiki/Main\_Page}.

\bibitem[Pientka(2008)]{Pientka08:POPL}
Brigitte Pientka.
\newblock A type-theoretic foundation for programming with higher-order
  abstract syntax and first-class substitutions.
\newblock In \emph{Principles of Programming Languages}, pages 371--382, 2008.
\newblock \href{http://dx.doi.org/10.1145/1328438.1328483}{doi:{\tt
  10.1145/1328438.1328483}}.

\bibitem[Pierce(1991)]{PierceThesis}
Benjamin~C. Pierce.
\newblock \emph{Programming with Intersection Types and Bounded Polymorphism}.
\newblock PhD thesis, Carnegie Mellon University, 1991.
\newblock Technical Report CMU-CS-91-205.

\bibitem[Pierce and Turner(2000)]{Pierce00}
Benjamin~C. Pierce and David~N. Turner.
\newblock Local type inference.
\newblock \emph{ACM Trans. Programming Languages and Systems}, 22:\penalty0
  1--44, 2000.
\newblock \href{http://dx.doi.org/10.1145/345099.345100}{doi:{\tt
  10.1145/345099.345100}}.

\bibitem[Reynolds(1988)]{Reynolds88:Forsythe}
John~C. Reynolds.
\newblock Preliminary design of the programming language {Forsythe}.
\newblock Technical Report CMU-CS-88-159, Carnegie Mellon University, 1988.
\newblock
  \href{http://doi.library.cmu.edu/10.1184/OCLC/18612825}{http://doi.library.cmu.edu/10.1184/OCLC/18612825}.

\bibitem[Reynolds(1996)]{Reynolds96:Forsythe}
John~C. Reynolds.
\newblock Design of the programming language {Forsythe}.
\newblock Technical Report CMU-CS-96-146, Carnegie Mellon University, 1996.

\bibitem[Xi(1998)]{XiThesis}
Hongwei Xi.
\newblock \emph{Dependent Types in Practical Programming}.
\newblock PhD thesis, Carnegie Mellon University, 1998.

\bibitem[Xi and Pfenning(1999)]{Xi99popl}
Hongwei Xi and Frank Pfenning.
\newblock Dependent types in practical programming.
\newblock In \emph{Principles of Programming Languages}, pages 214--227, 1999.
\newblock \href{http://dx.doi.org/10.1145/292540.292560}{doi:{\tt
  10.1145/292540.292560}}.

\end{thebibliography}

\end{document}